\newtheorem{remark}{Remark}
\newtheorem{definition}{Definition}
\newtheorem{proposition}{Proposition}
\newtheorem{theorem}{Theorem}
\DeclareMathAlphabet\mathbfcal{OMS}{cmsy}{b}{n}
\definecolor{darkspringgreen}{rgb}{0.09, 0.45, 0.27}
\def\BibTeX{{\rm B\kern-.05em{\sc i\kern-.025em b}\kern-.08em
    T\kern-.1667em\lower.7ex\hbox{E}\kern-.125emX}}
\newcommand{\hatbf}[1]{\widehat{\mathbf{#1}}}
\newcommand{\norm}[1]{\left\lVert#1\right\rVert}
\newcommand{\cmark}{\textcolor{darkspringgreen}{\ding{51}}}%
\newcommand{\xmark}{\textcolor{red}{\ding{55}}}%
\def\1{\mathds{1}}
\title{\LARGE \bf
MAD: A Magnitude And Direction Policy Parametrization for Stability Constrained Reinforcement Learning
}
\author{Luca Furieri, Sucheth Shenoy, Danilo Saccani, Andrea Martin, and Giancarlo Ferrari-Trecate%
\thanks{This project was conceived and led by Luca Furieri as Principal Investigator under the SNSF Ambizione grant PZ00P2\_208951. Sucheth Shenoy, Danilo Saccani, Andrea Martin, and Giancarlo Ferrari-Trecate acknowledge financial support from the Swiss National Science Foundation through NCCR Automation (grant 51NF40\_225155) and the NECON project (grant 200021\_219431).}%
\thanks{All authors were affiliated with the Institute of Mechanical Engineering, École Polytechnique Fédérale de Lausanne (EPFL) when this research was carried out. Current affiliations: L. Furieri, Department of Engineering Science, University of Oxford, United Kingdom (\texttt{luca.furieri@eng.ox.ac.uk}); S. Shenoy, Institute for Data Science Foundations, Hamburg University of Technology (TUHH), Germany (\texttt{sucheth.shenoy@tuhh.de}); A. Martin, School of Electrical Engineering and Computer Science and Digital Futures, KTH Royal Institute of Technology, Sweden (\texttt{andrmar@kth.se}); D. Saccani and G. Ferrari-Trecate, Institute of Mechanical Engineering, EPFL, CH-1015 Lausanne, Switzerland (\texttt{\{danilo.saccani, giancarlo.ferraritrecate\}@epfl.ch}).}}
\begin{document}

\maketitle
\thispagestyle{empty}
\pagestyle{empty}


\begin{abstract}
We introduce \emph{magnitude and direction} (MAD) policies, a policy parameterization for reinforcement learning (RL) that preserves $\ell_p$ closed-loop stability for nonlinear dynamical systems. Despite their completeness in describing all stabilizing controllers,  methods based on nonlinear Youla and system-level synthesis are significantly impacted by the difficulty of parametrizing $\ell_p$-stable operators. In contrast, MAD policies introduce explicit feedback on state-dependent features – a key element behind the success of reinforcement learning pipelines – without jeopardizing closed-loop stability. This is achieved by letting the magnitude of the control input be described by a disturbance-feedback $\ell_p$-stable operator, while selecting its direction based on state-dependent features through a universal function approximator. We further characterize the robust stability properties of MAD policies under model mismatch. Unlike existing disturbance-feedback policy parametrizations, MAD policies introduce state-feedback components compatible with model-free RL pipelines, ensuring closed-loop stability with no model information beyond assuming open-loop stability.  Numerical experiments show that MAD policies trained with deep deterministic policy gradient (DDPG) methods generalize to unseen scenarios – matching the performance of standard neural network policies while guaranteeing closed-loop stability by design.\looseness-1

\end{abstract}

\section{Introduction}


As machine learning algorithms become increasingly intertwined with the dynamics of modern cyber-physical systems, ensuring fail-safe and optimal operation remains a fundamental challenge. Reinforcement learning (RL) has demonstrated remarkable success in software-based domains, such as recommender systems and competitive games \cite{silver2016mastering}, and has recently been applied to safety-critical hardware tasks, including legged locomotion on rough terrain \cite{lee2020learning} and high-speed drone racing \cite{kaufmann2023champion}. At the core of these successes lies a key separation of concerns: in many applications, domain-specific engineering principles are still essential to ensure stability and safety, while RL is typically layered on top to optimize performance – often without formal guarantees, especially when interacting with complex, uncertain environments.

In RL for continuous control tasks, policy optimization typically relies on actor-critic methods \cite{konda1999actor}, where the actor learns a policy and the critic estimates its value. Deep deterministic policy gradient (DDPG) \cite{lillicrap2015continuous} embraces the deterministic policy gradient theorem \cite{silver2014deterministic} and leverages deep neural networks and experience replay. Soft actor-critic \cite{haarnoja2018soft} further enhances robustness by incorporating entropy-regularized learning, making it well-suited for high-dimensional control tasks. However, despite these advancements, the stability guarantees of RL methods as applied to dynamical systems are still not fully understood. Fundamental theoretical limitations hinder the direct optimization of control policies within feedback loops; for instance, \cite{fazel2018global} demonstrated that even in simple linear-quadratic regulator settings, standard policy-gradient algorithms such as REINFORCE \cite{williams1992simple} can drive a system unstable unless the step size is carefully based on a case-by-case analysis. This is also the case when using actor-critic methods, which introduce additional challenges such as slow convergence \cite{yang2019provably}.\looseness-1

Towards establishing guarantees for control applications, one line of work develops RL frameworks that embed structural constraints to ensure closed-loop stability. For instance, \cite{shi2022stability} shows that strict policy monotonicity is sufficient for voltage control with embedded stability guarantees, constructing an explicit neural network Lyapunov function and proposing a stable-DDPG algorithm for policy training. Another class of methods combines RL with model predictive control (MPC), leveraging MPC’s inherent stability and safety properties while improving performance through learning \cite{zanon2020safe}. 
Finally,~\cite{roberts2011feedback, furierineuralSLS, furieri2024learningtoboost, lawrence2024stabilizing} generalize classical control frameworks (e.g., Youla, SLS) to nonlinear systems using explicit models of pre-stabilized dynamics. For convenience, we collectively refer to these techniques as disturbance-feedback (DF) methods.

While DF methods \cite{roberts2011feedback, furierineuralSLS, furieri2024learningtoboost, lawrence2024stabilizing} provide a complete characterization of all and only the stabilizing control policy, a major practical limitation is that their expressivity relies entirely on the ability to parameterize dynamic operators that are $\ell_p$-stable. Constructing rich and expressive parametrizations of the space of $\ell_p$-stable operators remains a fundamental bottleneck: no known universal function approximator can represent all (and only) $\ell_p$-stable operators as the number of parameters increases. Another limitation is that the DF methods proposed in \cite{roberts2011feedback, furierineuralSLS, furieri2024learningtoboost} rely on full or partial knowledge of the system model to reconstruct disturbances, which are then used to design an optimal closed-loop response. In other words, these approaches refrain policies from reacting directly to state measurements – a key feature in the success of RL – as this may compromise stability. Furthermore, when the model is entirely unknown and disturbances cannot be reconstructed, only stable feedforward terms can be applied within the DF method of \cite{furieri2024learningtoboost}. To make RL a viable framework for nonlinear optimal control with formal guarantees, policy parameterizations must allow state-dependent responses – even when the system dynamics are fully unknown.
\subsubsection*{Contributions}  
This paper introduces \emph{magnitude and direction (MAD) policies}, a policy parametrization that makes DF methods compatible with standard RL pipelines for the optimal control of pre-stabilized nonlinear systems, enabling the inclusion of explicit state-dependent components while preserving closed-loop stability guarantees. This is achieved through a structured decomposition of the control policy into two terms: a state-feedback direction term, which is freely parameterized (e.g., via neural networks), and an $\ell_p$-stable magnitude term, which leverages available approximations of the space of $\ell_p$-stable operators.

We first prove that, 
MAD policies expand the set of achievable closed-loop behaviors compared to DF methods, owing to the introduction of freely chosen state-feedback direction terms. When the system model is known, MAD policies parametrize all stabilizing controllers. Finally, we characterize the robustness of MAD policies under model mismatch, deriving conditions that preserve stability as the mismatch increases. In contrast to DF methods – which, under complete model uncertainty reduce to applying fixed stable feedforward terms and are thus incompatible with model-free RL – MAD policies maintain the ability to search over state-feedback policies while preserving closed-loop stability. Numerical experiments demonstrate that MAD policies trained through DDPG showcase generalization capabilities comparable to those of standard RL policies.

\emph{Notation: }The set of all sequences $\mathbf{x} = (x_0,x_1,x_2,\ldots)$, where $x_t \in \mathbb{R}^n$ for all $t\in \mathbb{N}$ is denoted as $\ell^n$. Moreover,  $\mathbf{x}$ belongs to $\ell_p^n \subset \ell^n$ with $p \in \mathbb{N}$ if ${\mathbf{x}}_p = \left(\sum_{t=0}^\infty |x_t|^p\right)^{\frac{1}{p}} < \infty$, where $|\cdot|$ denotes any vector norm. We say that $\mathbf{x} \in \ell^n_\infty$ if $\operatorname{sup}_{t}|x_t|< \infty$. When clear from the context, we omit the superscript from $\ell^n$ (resp. $\ell^n_p$) and write $\ell$ (resp. $\ell_p$).
We use the notation $x_{j:i}$ to refer to the truncation of $\mathbf{x}$ to the finite-dimensional vector $(x_i,x_{i+1},\ldots,x_{j})$. An operator $\mathbf{A}:\ell^n \rightarrow \ell^m$ is said to be \emph{causal} if $\mathbf{A}(\mathbf{x}) = (A_0(x_0),A_1(x_{1:0}),\ldots,A_t(x_{t:0}),\ldots)$. If in addition $A_t(x_{t:0}) = A_t(0,x_{t-1:0})$, then $\mathbf{A}$ is said to be strictly causal. Similarly, $A_{j:i}(x_{j:0}) =  (A_i(x_{i:0}),A_{i+1}(x_{i+1:0}),\ldots,A_j(x_{j:0}))$. 
An operator $\mathbf{A}:\ell^n\rightarrow \ell^m$ is said to be $\ell_p$-stable if it is causal and $\mathbf{A}(\mathbf{a}) \in \ell_{p}^m$ for all $\mathbf{a} \in \ell_{p}^n$. Equivalently, we  write $\mathbf{A} \in \mathcal{L}_{p}$.  We say that an $\mathcal{L}_p$ operator $\mathbf{A}:\mathbf{w}\mapsto \mathbf{u}$  has finite $\mathcal{L}_p$-gain $\gamma(\mathbf{A})>0$ if  $\|\mathbf{u}\|_p\leq \gamma(\mathbf{A})\|\mathbf{w}\|_p$, for all $\mathbf{w}\in\ell_p^n$.

\section{Problem formulation}

We consider open-loop stable, or prestabilized, nonlinear discrete-time systems
\begin{equation}
    \label{eq:nonlinsys}
    x_{t} = f(x_{t-1},u_{t-1})+w_t\,,~~~t= 1,2,\ldots\,,
\end{equation}
where $x_t \in \mathbb{R}^n$ denotes the state vector, $u_t \in \mathbb{R}^m$ the control input, and $w_t \in \mathbb{R}^n$ the unknown process noise. In \emph{operator form}, \eqref{eq:nonlinsys} is rewritten as
\begin{equation}
\label{eq:operator_form}
    \mathbf{x} = \mathbf{F}(\mathbf{x},\mathbf{u}) + \mathbf{w}\,,
\end{equation}
where $\mathbf{F}:\ell^n\times \ell^m \rightarrow \ell^n$ is a strictly causal operator defined as $\mathbf{F}(\mathbf{x},\mathbf{u}) = (0,f(x_0,u_0),\ldots,f(x_{t},u_{t}),\ldots)$. We assume  that the operator
$ \mathbfcal{F}:(\mathbf{u},\mathbf{w})\mapsto \mathbf{x}\,,
$ which is the unique causal operator mapping a sequence of inputs and disturbances $(\mathbf{u}, \mathbf{w})$ to the corresponding state trajectory $\mathbf{x}$, belongs to $\mathcal{L}_p$. This implies that the system is open-loop $\ell_p$-stable or has been pre-stabilized. As a result, $\mathbf{x} \in \ell_p$ whenever $\mathbf{u} \in \ell_p$ and $\mathbf{w} \in \ell_p$.

However, in many engineering applications ensuring stability is not sufficient \cite{annaswamy2024control}; achieving satisfactory performance is also crucial. We consider an infinite-horizon discounted loss for a given control sequence $\mathbf{u}$ defined as
\begin{equation}
\label{eq:perf_definition}
L^\infty(\mathbf{x},\mathbf{u})=\mathbb{E}_{w_t\sim \mathcal{D}} \left[ \sum_{t=0}^{\infty}\alpha^tl(x_t,u_t) \right]\,,
\end{equation}
where  $l:\mathbb{R}^n\times \mathbb{R}^m \to \mathbb{R}$ is a continuous loss function such that, for all $\mathbf{u} \in \ell_p$ and every $0<\alpha<1$, it holds that $L^\infty(\mathbf{x},\mathbf{u})< \infty$, that is, the infinite-horizon loss is finite whenever the control input is a stable signal within $\ell_p$. 
Note that, in order to define the expectation in \eqref{eq:perf_definition}, the noise $w_t$ is assumed to follow an unknown distribution $\mathcal{D}$.

\subsection{Stability constrained reinforcement learning}

As mentioned above, our goal is to determine an optimal sequence of control inputs $ \mathbf{u}^\star $ that minimizes the infinite-horizon expected loss $ L^\infty(\mathbf{x},\mathbf{u}) $ while ensuring that the obtained policy guarantees closed-loop stability. Given the assumption that $ \mathbfcal{F} \in \mathcal{L}_p $,  this requirement translates into the condition that $ \mathbf{u} \in \ell_p $. We  formulate this \emph{stability-constrained RL problem} as  follows: 
\begin{equation}
    \min_{\mathbf{u} \in \ell_p,~\mathbf{x} =\mathbfcal{F}(\mathbf{u},\mathbf{w})}  L^\infty(\mathbf{x},\mathbf{u}).
    \label{RL_control_Problem}
\end{equation}
In other words, the system interacts with an environment subject to unknown disturbances, and the goal is to search over control input sequences that minimize the expected long-term cost while ensuring closed-loop stability in the sense
\begin{equation}
\label{eq:closed_loop_stability}
    \mathbf{w} \mapsto (\mathbf{x},\mathbf{u}) \in \mathcal{L}_p\,.
\end{equation}

Since the initial states and disturbances are not known in advance, the optimal sequence of control inputs $ \mathbf{u}^\star $ must be computed as a feedback law based on realized states, i.e., $ \mathbf{u}^\star = \mathbf{K}^\star(\mathbf{x}^\star) $, where $ \mathbf{K}^\star $ is the optimal causal policy and $\mathbf{x}^\star$ is the corresponding optimal state trajectory. A fundamental result in dynamic programming and RL states that, under standard regularity assumptions (bounded cost, compact input sets, and discount factor $ \alpha \in (0,1) $), the infimum over all causal policies can be achieved by a \emph{memoryless causal policy} of the form $ u_t^\star = \mu^\star(x_t^\star) $; we refer the interested reader to \cite{bertsekas2019reinforcement}. 



\emph{Policy training in RL:} To train the policy, many RL methods rely on policy gradient optimization. Deterministic policies\footnote{In RL, policies are often stochastic, meaning that $u_t$ is sampled from a learned distribution. However, deterministic policies offer more efficient policy gradient estimation for control tasks in continuous spaces \cite{lillicrap2015continuous}.} are parametrized, typically via neural networks, as $u_t=\mu(x_t,\theta)$, with parameters $\theta \in \mathbb{R}^d$. The parameters can then be updated iteratively, for example, through gradient descent:  $\theta\leftarrow \theta-\eta \nabla_\theta L^\infty(\theta)$, with $\eta>0$ sufficiently small. The deterministic policy gradient theorem \cite{silver2014deterministic} states that: 
\begin{equation} 
\label{eq:gradients_policy} \nabla_\theta L^\infty(\theta) = \mathbb{E}_{x\sim\rho^\mu}\left[\nabla_\theta \mu(x,\theta) \nabla_u Q_\mu(x,u)\big|{u = \mu(x,\theta)}\right]\,, 
\end{equation}
where $\rho^\mu$ denotes the state distribution induced by policy $\mu$, and the $Q$-function $Q_{\mu}$ satisfies the Bellman equation: 
\begin{equation} 
\label{eq:Bellman} 
Q_\mu(x,u) = l(x,u)+\alpha\mathbb{E}_{w \sim \mathcal{D}}\left[Q_\mu(x^+,\mu(x^+,\theta))\right]\,, 
\end{equation}
where $ x^+ = f(x,u)+w$ and $w\sim \mathcal{D}$. Actor-critic methods, such as DDPG \cite{lillicrap2015continuous}, leverage the structure of \eqref{eq:gradients_policy} and \eqref{eq:Bellman} to approximate the gradient efficiently. They parametrize the $Q$-function (the ``critic'') using a neural network $Q(x,u,\phi)$ and the policy (the ``actor'') using another neural network $\mu(x,\theta)$. During training,  the critic and the actor are iteratively updated to minimize the mismatches in  the policy gradient in \eqref{eq:gradients_policy} and the Bellman equation~\eqref{eq:Bellman}.

\smallskip

Crucially, restricting the search over static policies poses a challenge to the stability requirements. A randomly initialized memoryless policy $ u_t = \mu(x_t, \theta) $ may induce an unstable closed-loop map. This creates a fundamental difficulty for RL-based control: intermediate training stages may result in unstable closed-loop responses, potentially disrupting the learning process entirely.  This motivates extending the search to dynamic policies.\footnote{Dynamic policies induce a non-Markovian closed-loop behavior; this makes the expressions \eqref{eq:gradients_policy}-\eqref{eq:Bellman} invalid over dynamic policies. To remedy this issue, the literature of RL over memory-based policies appropriately augments the state vector $s$ to include the internal state of the policy. For implementation (Section~\ref{se:Numerical}), we adopt the solution proposed in  \cite{heess2015memory}. } We recall from~\cite{furieri2024learningtoboost} the DF characterization of all the causal control policies that guarantee closed-loop stability for the considered setup.

\begin{theorem}{(adapted from \cite{furieri2024learningtoboost})} \label{thm:neurSLS}
    Let $\mathbfcal{F} \in \mathcal{L}_p$, and define the control input as
    \begin{equation} \label{eq:dist_feedback}
        \mathbf{u}=\mathbfcal{M}(\mathbf{x-F(x,u)}) = \mathbfcal{M}(\mathbf{w}),
    \end{equation}
    where $\mathbfcal{M}:\ell^n\rightarrow \ell^m$ is a causal operator. If $\mathbfcal{M} \in \mathcal{L}_p$, then the closed-loop system satisfies \eqref{eq:closed_loop_stability}. Conversely, if a causal policy $\mathbf{u}=\mathbf{K}(\mathbf{x})$ ensures \eqref{eq:closed_loop_stability}, then there exists an operator $\mathbfcal{M}\in \mathcal{L}_p$ that achieves the same closed-loop behavior.
\end{theorem}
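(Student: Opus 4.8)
The plan is to build everything on a single structural fact: because $\mathbf{F}$ is \emph{strictly} causal, the process noise $\mathbf{w}$ and the closed-loop trajectory are in causal one-to-one correspondence under any fixed causal feedback law. From \eqref{eq:operator_form} the noise is read off as $\mathbf{w} = \mathbf{x} - \mathbf{F}(\mathbf{x},\mathbf{u})$, i.e. $w_0 = x_0$ and $w_t = x_t - f(x_{t-1},u_{t-1})$ for $t\ge 1$. Strict causality guarantees that $w_t$ never depends on $x_t$ (nor on $u_t$) through $\mathbf{F}$, so the map $(\mathbf{x},\mathbf{u})\mapsto\mathbf{w}$ is well-defined and causal; conversely, given $\mathbf{w}$ together with any causal law closing the loop, the trajectory is generated forward in time through $x_t = f(x_{t-1},u_{t-1}) + w_t$. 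This change of variables is the backbone of both directions.

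For sufficiency I would first establish well-posedness of the implicit control law \eqref{eq:dist_feedback}. Substituting the noise expression gives $\mathbf{u} = \mathbfcal{M}(\mathbf{x}-\mathbf{F}(\mathbf{x},\mathbf{u}))$, an implicit equation in $\mathbf{u}$. Unrolling it componentwise, $u_t = M_t(w_{t:0})$ where each $w_\tau$ with $\tau\le t$ depends only on $x_{\tau:0}$ and $u_{\tau-1:0}$; hence $u_t$ is a function of $x_{t:0}$ and $u_{t-1:0}$ but not of $u_t$ itself, so the recursion admits a unique causal solution, and along it $\mathbfcal{M}(\mathbf{x}-\mathbf{F}(\mathbf{x},\mathbf{u}))$ coincides with $\mathbfcal{M}(\mathbf{w})$ evaluated at the true noise. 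With this identification the $\ell_p$ claim is immediate: fix any $\mathbf{w}\in\ell_p$; since $\mathbfcal{M}\in\mathcal{L}_p$ we get $\mathbf{u}=\mathbfcal{M}(\mathbf{w})\in\ell_p$, and since $\mathbfcal{F}\in\mathcal{L}_p$ with $\mathbf{x}=\mathbfcal{F}(\mathbf{u},\mathbf{w})$ and both arguments in $\ell_p$, we conclude $\mathbf{x}\in\ell_p$. Therefore $\mathbf{w}\mapsto(\mathbf{x},\mathbf{u})\in\mathcal{L}_p$, which is exactly \eqref{eq:closed_loop_stability}.

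For the converse, suppose a causal $\mathbf{u}=\mathbf{K}(\mathbf{x})$ achieves \eqref{eq:closed_loop_stability}. The natural candidate is to \emph{define} $\mathbfcal{M}$ as the noise-to-input map induced by $\mathbf{K}$: for each $\mathbf{w}$, solve $\mathbf{x}=\mathbf{F}(\mathbf{x},\mathbf{K}(\mathbf{x}))+\mathbf{w}$ and set $\mathbfcal{M}(\mathbf{w}):=\mathbf{K}(\mathbf{x})$. Strict causality of $\mathbf{F}$ again renders the forward recursion $x_t=f(x_{t-1},K_{t-1}(x_{t-1:0}))+w_t$ uniquely solvable and causal, so $\mathbfcal{M}$ is a well-defined causal operator. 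Its $\mathcal{L}_p$ membership is precisely the stability hypothesis: by \eqref{eq:closed_loop_stability}, $\mathbf{w}\in\ell_p$ implies $\mathbf{u}=\mathbf{K}(\mathbf{x})\in\ell_p$, i.e. $\mathbfcal{M}\in\mathcal{L}_p$. Behavioral equivalence then closes the argument: for a common $\mathbf{w}$, the $\mathbf{K}$-loop yields $(\mathbf{x}_K,\mathbf{u}_K)$ with $\mathbf{u}_K=\mathbfcal{M}(\mathbf{w})$ by construction, and feeding this same $\mathbfcal{M}(\mathbf{w})$ into $\mathbf{x}=\mathbfcal{F}(\mathbf{u},\mathbf{w})$ returns $\mathbfcal{F}(\mathbf{u}_K,\mathbf{w})=\mathbf{x}_K$, so the two trajectories coincide for every $\mathbf{w}$.

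The step I expect to demand the most care is well-posedness in both directions — showing that the implicit and closed-loop equations admit \emph{unique} causal solutions — since the stability conclusions are otherwise one-line consequences of the $\mathcal{L}_p$ assumptions on $\mathbfcal{F}$, $\mathbfcal{M}$, and $\mathbf{K}$. The crucial leverage is the strict causality of $\mathbf{F}$, reflected in the leading zero of $\mathbf{F}(\mathbf{x},\mathbf{u})$, which breaks the algebraic loop at every time step and lets each recursion be resolved one component at a time. Without it, $x_t$ and $u_t$ would be coupled instantaneously and neither the forward solve nor the identity $\mathbfcal{M}(\mathbf{x}-\mathbf{F}(\mathbf{x},\mathbf{u}))=\mathbfcal{M}(\mathbf{w})$ could be guaranteed.
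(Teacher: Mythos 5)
Your proof is correct. Note, however, that the paper itself does not supply a proof of Theorem~\ref{thm:neurSLS}: the result is stated as adapted from \cite{furieri2024learningtoboost} and only the proofs of Theorem~\ref{th:extended-param} and Proposition~\ref{th:result_robust} appear in the appendix, so there is no in-paper argument to compare against. Your route is the standard one for nonlinear SLS/Youla-type completeness results, and you correctly identify the load-bearing step: strict causality of $\mathbf{F}$ breaks the algebraic loop, so that (i) the implicit law $\mathbf{u}=\mathbfcal{M}(\mathbf{x}-\mathbf{F}(\mathbf{x},\mathbf{u}))$ resolves uniquely forward in time with the reconstructed noise equal to the true $\mathbf{w}$, making sufficiency a one-line consequence of $\mathbfcal{M}\in\mathcal{L}_p$ and $\mathbfcal{F}\in\mathcal{L}_p$; and (ii) for the converse, the noise-to-input map $\mathbfcal{M}:\mathbf{w}\mapsto\mathbf{K}(\mathbf{x})$ induced by the $\mathbf{K}$-loop is a well-defined causal operator whose $\mathcal{L}_p$ membership is exactly the hypothesis \eqref{eq:closed_loop_stability}, with behavioral equivalence following by induction on $t$. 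This matches the argument in the cited source. The only place where you could tighten the writing is the final equivalence step of the converse: it reads slightly circularly as stated, and is cleanest phrased as an explicit induction showing that the DF closed loop and the $\mathbf{K}$ closed loop generate identical $(x_t,u_t)$ at every step, since the reconstructed disturbance coincides with the true one along the realized trajectory.
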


In line with classical results about nonlinear Youla parametrizations \cite{fujimoto2000}, Theorem~\ref{thm:neurSLS} provides a necessary condition for ensuring closed-loop stability, and highlights that exploring the space of operators $\mathbfcal{M} \in \mathcal{L}_p$ is sufficient to characterize all stabilizing policies. The key property is that, for any choice of $\mathbfcal{M} \in \mathcal{L}_p$, the corresponding control policy \eqref{eq:dist_feedback} is stabilizing. This completely decouples the problem of ensuring stability to that of optimizing over operators in $\mathcal{L}_p$ to solve \eqref{RL_control_Problem}. Accordingly, the stability-constrained RL problem \eqref{RL_control_Problem} admits the following reformulation over the class of all DF policies. 
\begin{subequations} 
    \begin{align}
        &\min_{\mathbf{x},\mathbf{u}}  && L^\infty(\mathbf{x},\mathbf{u})\label{eq:s-RL}\\
        & \operatorname{subject~to} && \mathbf{x} = \mathbfcal{F}(\mathbf{u},\mathbf{w})\,, \mathbf{u} = \mathbfcal{M}(\mathbf{x}-\mathbf{F}(\mathbf{x},\mathbf{u}))\,,\\
        &~&&\mathbfcal{M} \in \mathcal{L}_p\,. \label{eq:s-RL_end}    \end{align}
    \end{subequations}

\noindent Despite the generality of the policy parametrization established in Theorem~\ref{thm:neurSLS}, deploying RL routines for learning over DF policies $\mathbf{u} = \mathbfcal{M}(\mathbf{w})$ comes with novel challenges that we define in the following subsection.

\subsection{Open challenges of stability-constrained RL}

Previous work has proposed addressing stability-constrained RL \cite{lawrence2024stabilizing} and finite-horizon optimal control problems \cite{furieri2024learningtoboost, furierineuralSLS} by learning over DF policies of the form $\mathbf{u} = \mathbfcal{M}(\theta)(\mathbf{w})$, where
\begin{equation} \label{eq:class_parametrized}
\mathbfcal{M}(\theta) \in \mathcal{U} \subset \mathcal{L}_p, \quad \forall \theta \in \mathbb{R}^d\,.
\end{equation}
Here, $\mathcal{U}$ denotes a finite-dimensional class of $\mathcal{L}_p$ operators that can be explicitly characterized – for instance, as in \cite{orvieto2023resurrecting,lawrence2024stabilizing,wang2022learning}. However, no universal function approximator is known to capture all (and only) elements of $\mathcal{L}_p$, and non-asymptotic results on how well $\mathcal{U}$ covers $\mathcal{L}_p$ are not available. This limitation can significantly impact the efficacy of DF policies, as their expressivity entirely depends on how accurately $\mathcal{U}$ approximates $\mathcal{L}_p$.

The second and more fundamental challenge arises from the policy architecture \eqref{eq:dist_feedback} itself: computing control inputs based on \emph{external disturbances} can affect the generalization capabilities achievable through RL algorithms, whose efficacy has been showcased with policies that directly respond to state measurements $ x_t $.

%

\smallskip

\emph{Example:} 
Consider a state-feedback policy $u_t = \mu(x_t)$, such as a proportional controller $u_t = Kx_t$, and a realization of external disturbances $\mathbf{w} = (x_0, w_1, \ldots)$. Let the system \eqref{eq:nonlinsys} evolve under the policy $u_t = Kx_t$. The DF policy that reproduces the same closed-loop behavior can be expressed as $u_t = K\mathcal{F}_t(u_{t-1:0}, w_{t-1:0})$, introducing a recursive, nonlinear dependency in computing $u_t$. Recovering this simple behavior during learning may prove difficult, as a DF policy has to ``reverse engineer'' how disturbances influence the closed-loop state evolution through a nonlinear model. As a result, ensuring that even basic state-feedback policies are included in the class parameterized by DF policies may be challenging. 


In what follows, we develop a policy parametrization that mitigates the open challenges above. 



\section{MAD policies for stability-constrained RL}
We introduce a novel class of dynamic policies for stability-constrained RL of $\ell_p$ pre-stabilized dynamical systems. Through a polar  decomposition of the control input into magnitude and direction terms, the proposed class of MAD policies  enables behaviors that react directly to state measurements without requiring richer parametrizations of $\mathcal{L}_p$.\looseness-1

First, we show that when the system model \eqref{eq:nonlinsys} is known, MAD policies expand the range of achievable behaviors as compared to DF policies of the form \eqref{eq:dist_feedback}. Moreover, when $\mathbfcal{M}$ is allowed to be any $\mathcal{L}_p$ operator, MAD policies recover all and only the stabilizing control policies. However, in most RL-based control applications, the system model is partially or fully unknown. We show that MAD policies remain stabilizing even in such settings, with expressivity degrading as the model uncertainty increases. Yet, MAD policies retain expressive state-feedback behaviors even in the fully unknown case, making them compatible with model-free RL algorithms.

\subsection{Stability-constrained RL with model knowledge}
We here  define the proposed policy class assuming knowledge of the system model \eqref{eq:nonlinsys}. 
\begin{tcolorbox}[colback=gray!1.5, colframe=black!20]
\begin{definition}[MAD policies]
    We say that a policy is MAD for $\mathcal{U}$ if it can be written as
    \begin{align}
    \label{eq:MAD_policy}
        u_t = |\mathcal{M}_t(w_{t:0})+a_t(x_0)|\cdot D_t(x_{t:0})\,,
    \end{align}
    where we have:
    \begin{enumerate}
        \item a stable magnitude operator $\mathbfcal{M} \in \mathcal{U}\subseteq \mathcal{L}_p$ acting on disturbances $w_t=x_t-f(x_{t-1},u_{t-1})$,
        \item a feed-forward term $a_t(x_0)$ with $\bm{a}(x_0) \in \ell_p$ for every $x_0 \in \mathbb{R}^n$,
        \item a direction vector $D_t$ with $|D_t(x_{t:0})|\leq 1$.
    \end{enumerate}
\end{definition}
\end{tcolorbox}
The MAD policy class relies on a polar decomposition of the control input to enforce stable closed-loop behavior. The magnitude term $|\mathcal{M}_t(w_{t:0})+a_t(x_0)|$ adjusts the control effort based on the initial condition $x_0$ and the reconstructed disturbances $w_t$, while the direction term $D_t(x_{t:0})$ provides state-dependent feedback, subject to a global norm bound.


\begin{remark}
While MAD policies and DF methods both rely on a finite-dimensional approximation of $\mathcal{L}_p$-stable operators, MAD policies fundamentally differ in how they leverage this approximation. In DF methods, the entire closed-loop behavior is constrained by the expressivity of the $\mathcal{L}_p$ parametrization. In contrast, MAD policies separate stability and expressivity, by isolating the stability constraint within the magnitude term, while allowing the direction term to be freely parameterized (e.g., via neural networks). 
\end{remark}

For a fixed approximation $\mathcal{U} \subset \mathcal{L}_p$, policies that are MAD~for~$\mathcal{U}$ achieve a strictly richer set of stable behaviors than DF policies in the form \eqref{eq:dist_feedback}. To proceed with this analysis, define the set of behaviors 
    \begin{equation*}
        \mathcal{B}^{\mathtt{MAD}}_{\mathcal{U}}(\mathbf{w})=\{\mathbf{u} \in \ell: \exists\mathbfcal{M}\in \mathcal{U}: u_t\text{ complies with \eqref{eq:MAD_policy}} \}\,,
    \end{equation*}
that are MAD for $\mathcal{U}$, and the set of DF behaviors  \cite{furieri2024learningtoboost} as
    \begin{equation*}
        \mathcal{B}_{\mathcal{U}}(\mathbf{w})=\{\mathbf{u} \in \ell: \exists\mathbfcal{M}\in \mathcal{U}: \mathbf{u} = \mathbfcal{M}(\mathbf{w}) \}\,.
    \end{equation*}
\noindent We next show that, when the system model \eqref{eq:nonlinsys} is known, MAD policies enable searching over an extended set of behaviors. Further, policies that are MAD for $\mathcal{L}_p$ comprise the set of all stabilizing policies for the system \eqref{eq:nonlinsys}.

\begin{theorem}
\label{th:extended-param}
Let $\mathcal{U}$ be a set of  operators. 
For every $\mathbf{w}\in\ell$ it holds that
\begin{equation*}
   \mathcal{B}_{\mathcal{U}}(\mathbf{w}) 
\;\subseteq\;
\mathcal{B}_{\mathcal{U}}^{\mathtt{MAD}}(\mathbf{w})\,.
\end{equation*}
Furthermore, if $\mathcal{U} \in \mathcal{L}_p$ and $\mathbf{w} \in \ell_p$, then  $\mathcal{B}_{\mathcal{U}}^{\mathtt{MAD}}(\mathbf{w}) \in \ell_p$. Last, a policy $\mathbf{u} = \mathbf{K}(\mathbf{x})$ is $\ell_p$-stabilizing if and only if it is MAD for $\mathcal{L}_p$.
\end{theorem}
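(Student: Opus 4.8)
The plan is to prove the three claims sequentially, using each to support the next, and to invoke Theorem~\ref{thm:neurSLS} for the final equivalence. For the inclusion $\mathcal{B}_{\mathcal{U}}(\mathbf{w}) \subseteq \mathcal{B}_{\mathcal{U}}^{\mathtt{MAD}}(\mathbf{w})$ I would argue constructively. Given any DF behavior $\mathbf{u} = \mathbfcal{M}(\mathbf{w})$ with $\mathbfcal{M}\in\mathcal{U}$, I reproduce it with a MAD policy that reuses the \emph{same} magnitude operator $\mathbfcal{M}$, sets the feed-forward term to $a_t(x_0)\equiv 0$ (which trivially satisfies $\bm{a}(x_0)\in\ell_p$), and takes the direction to be the normalized DF output,
\[
D_t = \frac{\mathcal{M}_t(w_{t:0})}{|\mathcal{M}_t(w_{t:0})|}\quad\text{whenever}\ \mathcal{M}_t(w_{t:0})\neq 0,
\]
and $D_t$ equal to any fixed unit vector otherwise (in which case the magnitude vanishes and $u_t=0$ regardless). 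Substituting into \eqref{eq:MAD_policy} gives $|\mathcal{M}_t(w_{t:0})|\cdot D_t = \mathcal{M}_t(w_{t:0}) = u_t$, so the two behaviors coincide and $\mathbf{u}\in\mathcal{B}_{\mathcal{U}}^{\mathtt{MAD}}(\mathbf{w})$.

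\textbf{Main obstacle.} Definition~1 requires the direction to be a causal \emph{state-feedback} map $D_t(x_{t:0})$, whereas the construction above expresses it through the reconstructed disturbance $w_{t:0}$. I would close this gap using the standing assumption of model knowledge: along the realized closed-loop trajectory $w_t = x_t - f(x_{t-1},u_{t-1})$, and since each input $u_s$ is itself computed causally from $x_{s:0}$, an induction on $t$ (starting from $w_0=x_0$ and $u_0$ being a function of $x_0$) shows that $w_{t:0}$ is a causal function of $x_{t:0}$. Hence $D_t$ may be legitimately rewritten as a function of $x_{t:0}$, the MAD policy is a well-posed causal state-feedback law, and the inclusion holds. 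This reconstruction step is the only delicate point; everything else is algebraic.

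For the second claim I bound the input pointwise in time. Using $|D_t(x_{t:0})|\le 1$,
\[
|u_t| = |\mathcal{M}_t(w_{t:0}) + a_t(x_0)|\,|D_t(x_{t:0})| \le |\mathcal{M}_t(w_{t:0}) + a_t(x_0)|,
\]
so by Minkowski's inequality $\|\mathbf{u}\|_p \le \|\mathbfcal{M}(\mathbf{w})\|_p + \|\bm{a}(x_0)\|_p$. Since $\mathbfcal{M}\in\mathcal{U}\subseteq\mathcal{L}_p$ and $\mathbf{w}\in\ell_p$ yield $\mathbfcal{M}(\mathbf{w})\in\ell_p$, while $\bm{a}(x_0)\in\ell_p$ by Definition~1, both terms are finite and thus $\mathbf{u}\in\ell_p$, as claimed.

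For the equivalence, the direction $(\Leftarrow)$ is immediate: if the policy is MAD for $\mathcal{L}_p$, then for any $\mathbf{w}\in\ell_p$ the second claim (with $\mathcal{U}=\mathcal{L}_p$) gives $\mathbf{u}\in\ell_p$, and since $\mathbfcal{F}\in\mathcal{L}_p$ we obtain $\mathbf{x}=\mathbfcal{F}(\mathbf{u},\mathbf{w})\in\ell_p$, establishing \eqref{eq:closed_loop_stability}. For $(\Rightarrow)$, if $\mathbf{u}=\mathbf{K}(\mathbf{x})$ is $\ell_p$-stabilizing, Theorem~\ref{thm:neurSLS} provides an operator $\mathbfcal{M}\in\mathcal{L}_p$ reproducing the same closed-loop behavior, so this behavior lies in $\mathcal{B}_{\mathcal{L}_p}(\mathbf{w})$; the first claim with $\mathcal{U}=\mathcal{L}_p$ then places it in $\mathcal{B}_{\mathcal{L}_p}^{\mathtt{MAD}}(\mathbf{w})$, i.e.\ the policy is MAD for $\mathcal{L}_p$. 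This completes the equivalence and the proof.
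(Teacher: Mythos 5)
Your proof is correct and follows essentially the same route as the paper's: the same choice of $D_t$ as the normalized $\mathcal{M}_t(w_{t:0})$ with $a_t\equiv 0$ for the inclusion, the same pointwise bound $|u_t|\le|\mathcal{M}_t(w_{t:0})+a_t(x_0)|$ for the $\ell_p$ claim, and the same appeal to Theorem~\ref{thm:neurSLS} for the equivalence. Your explicit inductive argument that $w_{t:0}$ is a causal function of $x_{t:0}$ (so that $D_t$ is a legitimate state-feedback map) is a point the paper handles only implicitly via the phrase ``recursively define $w_t = x_t - f(x_{t-1},u_{t-1})$,'' and is a welcome clarification rather than a deviation.
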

\begin{proof}
    The proof can be found in the Appendix~\ref{app:proof_theorem}
\end{proof}

In order for MAD policies to span the set of \emph{all} stabilizing control policies as per Theorem~\ref{th:extended-param}, perfect knowledge of the system model \eqref{eq:nonlinsys} is required.  Specifically, achieving a target stable closed-loop behavior \(\mathbf{w} \mapsto (\bar{\mathbf{x}}, \bar{\mathbf{u}})\) requires setting \(\mathbfcal{M} = \bar{\bm{\Psi}}\), where \(\bar{\bm{\Psi}}\) is the desired mapping. Since \(\bar{\mathbf{u}} = \mathbfcal{M}(\bar{\mathbf{x}} - \mathbf{F}(\bar{\mathbf{x}}, \bar{\mathbf{u}}))\) depends on the system dynamics \(\mathbf{F}\), it can only be computed with full model knowledge. Next, we deal with characterizing MAD policies that remain stabilizing in the presence of a partially known or completely unknown system model.

\subsection{Stability-constrained RL without model knowledge}

Let us denote the nominal model available for design as ${\hatbf{F}}(\mathbf{x},\mathbf{u})$ and the real unknown plant as 
\begin{equation}
    \label{eq:mismatch_plant}
    \mathbf{F}(\mathbf{x},\mathbf{u}) = \hatbf{F}(\mathbf{x},\mathbf{u}) + \bm{\Delta}(\mathbf{x},\mathbf{u})\,,
\end{equation}
where $\bm{\Delta}$ is a strictly causal operator representing the model mismatch. We say that we have partial knowledge about $\mathbf{F}$ if  $\bm{\Delta}$ has a finite $\ell_p$ gain, that is, $\gamma(\bm\Delta)<\infty$. Otherwise, we assume that $\gamma(\bm{\Delta}) = \infty$. Given a nominal system model $\widehat{\mathbf{F}}$, it is not possible anymore to exactly reconstruct external disturbances from state measurements – thus preventing the application of Theorem~\ref{th:extended-param}. Motivated as such, we study the robust stabilization properties of MAD policies in the presence of a partially or fully unknown system model. 

\begin{proposition}
	\label{th:result_robust} 
	Let $\gamma(\mathbf{\Delta})$ denote the $\ell_p$ gain of the operator  $\bm{\Delta}$ in \eqref{eq:mismatch_plant}. Furthermore, assume that
 the operator $\mathbfcal{F}$  has finite $\ell_p$-gain $\gamma(\mathbf{\mathbfcal{F}})$.  Then, for any $\mathbfcal{M}$ such that 
 \begin{equation}
 \label{eq:condition_robustness}
 \gamma(\mathbfcal{M})<\gamma(\mathbf{\Delta})^{-1}(\gamma(\mathbfcal{F})+1)^{-1}\,,
 \end{equation}
 for any $\bm{a}(x_0)\in \ell_p$, and for any direction term with $|D_t(\cdot)|\leq 1$, the MAD control policy given by
 \begin{subequations}
 \begin{align}
 &\widehat{w}_t = x_t - \hat{f}(x_{t-1},u_{t-1})\,,\nonumber\\
 &u_t = |\mathcal{M}_t(\widehat{w}_{t:0})+a_t(x_0)|\cdot D_t(x_{t:0})\,, \nonumber
 \end{align}
 \end{subequations}
 stabilizes the system \eqref{eq:nonlinsys}. 
 \end{proposition}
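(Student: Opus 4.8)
The plan is to prove $\ell_p$-stability of the closed loop by a small-gain argument, treating the discrepancy between the reconstructed and the true disturbance as the feedback perturbation whose loop gain must be contained. The first step is to express the reconstructed disturbance through the true disturbance and the model mismatch. Since $\widehat{w}_t = x_t - \hat{f}(x_{t-1},u_{t-1})$ while the real plant obeys $x_t = f(x_{t-1},u_{t-1}) + w_t$, subtracting gives, in operator form, $\widehat{\mathbf{w}} = \mathbf{w} + \bm{\Delta}(\mathbf{x},\mathbf{u})$, where I used the decomposition $\mathbf{F} = \hatbf{F} + \bm{\Delta}$ from \eqref{eq:mismatch_plant}. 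This identity is the crux: the policy no longer feeds back $\mathbf{w}$, but $\mathbf{w}$ corrupted by a term of gain $\gamma(\bm{\Delta})$ evaluated along the realized trajectory, which is what couples the control channel back to the state.

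Next I would bound the control magnitude. Because $|D_t(x_{t:0})| \le 1$, we get the pointwise estimate $|u_t| \le |\mathcal{M}_t(\widehat{w}_{t:0}) + a_t(x_0)|$, hence $\|\mathbf{u}\|_p \le \|\mathbfcal{M}(\widehat{\mathbf{w}})\|_p + \|\bm{a}(x_0)\|_p \le \gamma(\mathbfcal{M})\,\|\widehat{\mathbf{w}}\|_p + b$, with $b := \|\bm{a}(x_0)\|_p < \infty$ by the standing assumption $\bm{a}(x_0)\in\ell_p$. I then chain the remaining gains: from the identity above, $\|\widehat{\mathbf{w}}\|_p \le \|\mathbf{w}\|_p + \gamma(\bm{\Delta})\bigl(\|\mathbf{x}\|_p + \|\mathbf{u}\|_p\bigr)$, and from $\mathbf{x} = \mathbfcal{F}(\mathbf{u},\mathbf{w})$ together with the finite gain of $\mathbfcal{F}$, $\|\mathbf{x}\|_p \le \gamma(\mathbfcal{F})\bigl(\|\mathbf{u}\|_p + \|\mathbf{w}\|_p\bigr)$. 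Substituting and collecting the coefficient of $\|\mathbf{u}\|_p$ yields the self-referential bound $\|\mathbf{u}\|_p \le \gamma(\mathbfcal{M})\gamma(\bm{\Delta})(\gamma(\mathbfcal{F})+1)\,\|\mathbf{u}\|_p + c$, where $c$ depends only on $\|\mathbf{w}\|_p$ and $b$ and is finite whenever $\mathbf{w}\in\ell_p$. Condition \eqref{eq:condition_robustness} is exactly $\gamma(\mathbfcal{M})\gamma(\bm{\Delta})(\gamma(\mathbfcal{F})+1) < 1$, so the factor multiplying $\|\mathbf{u}\|_p$ is strictly below one; rearranging produces a finite bound on $\|\mathbf{u}\|_p$, and then $\mathbf{x} = \mathbfcal{F}(\mathbf{u},\mathbf{w}) \in \ell_p$ follows from $\mathbfcal{F}\in\mathcal{L}_p$, giving $\mathbf{w}\mapsto(\mathbf{x},\mathbf{u})\in\mathcal{L}_p$.

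The main obstacle is that the rearrangement in the last step is legitimate only once $\|\mathbf{u}\|_p$ is already known to be finite, which is precisely what we are proving: dividing by $1 - \gamma(\mathbfcal{M})\gamma(\bm{\Delta})(\gamma(\mathbfcal{F})+1)$ when $\|\mathbf{u}\|_p$ could a priori be $+\infty$ is not yet justified. I would resolve this in the standard way, running the entire chain of inequalities on the $T$-truncated signals $\mathbf{u}_{T:0}$, $\mathbf{x}_{T:0}$, $\widehat{\mathbf{w}}_{T:0}$, which belong to $\ell_p$ for every finite $T$. Causality of $\mathbfcal{M}$ and $\mathbfcal{F}$ and strict causality of $\bm{\Delta}$ guarantee the gain bounds survive truncation, so the same computation gives a bound on $\|\mathbf{u}_{T:0}\|_p$ that is uniform in $T$; letting $T\to\infty$ by monotone convergence transfers it to $\|\mathbf{u}\|_p$. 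A secondary point to make precise is the norm convention on the stacked arguments $(\mathbf{x},\mathbf{u})$ entering $\gamma(\bm{\Delta})$ and $(\mathbf{u},\mathbf{w})$ entering $\gamma(\mathbfcal{F})$: taking the combined norm as the sum of the component $\ell_p$-norms is what makes the loop-gain coefficient collapse exactly to $\gamma(\mathbfcal{M})\gamma(\bm{\Delta})(\gamma(\mathbfcal{F})+1)$, matching \eqref{eq:condition_robustness}.
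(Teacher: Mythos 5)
Your proof follows essentially the same route as the paper: the identity $\widehat{\mathbf{w}} = \mathbf{w} + \bm{\Delta}(\mathbfcal{F}(\mathbf{u},\mathbf{w}),\mathbf{u})$, the bound $\|\mathbf{u}\|_p \le \gamma(\mathbfcal{M})\|\widehat{\mathbf{w}}\|_p + \|\bm{a}(x_0)\|_p$ from $|D_t|\le 1$, and the same small-gain contraction under \eqref{eq:condition_robustness} (the paper isolates $\|\widehat{\mathbf{w}}\|_p$ where you isolate $\|\mathbf{u}\|_p$, an immaterial difference). Your truncation argument to justify the rearrangement when the norms are not yet known to be finite is a point of rigor the paper's proof passes over silently, and it is handled correctly.
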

 \begin{proof}
     The proof can be found in Appendix~\ref{app:proof_prop}.
\end{proof}


Proposition~\ref{th:result_robust} shows that when the model is perfectly known (i.e., $\gamma(\bm{\Delta})=0$), all stable closed-loop behaviors from Theorem~\ref{th:extended-param} are recovered, as $\widehat{\mathbf{w}} = \mathbf{w}$. As the gain $\gamma(\bm{\Delta})$ increases, disturbance reconstruction from state measurements becomes less accurate, requiring a sufficiently small $\ell_p$ gain for $\mathbfcal{M}$.  Crucially, MAD policies enable the design of stabilizing state-feedback terms even when $\gamma(\bm{\Delta}) \rightarrow \infty$; this is in contrast to DF policies, where accurate reconstruction of $\widehat{\mathbf{w}}$ is not possible without a model. In this case,  \eqref{eq:MAD_policy} simplifies to $u_t = |a_t(x_0)|D_t(x_{t:0})$, where $a_t(x_0)$ acts as a feed-forward term and $D_t(x_{t:0})$ is a neural state-feedback direction term.  As we show in the next section, this enables MAD policies to achieve generalization similar to standard RL policies while maintaining stability without model knowledge. We summarize the key features of MAD policies as compared to DF and standard RL policies in Table~\ref{tab:comparison}.\looseness-1




\begin{table*}[t]
    \centering
    \caption{Summary of key features of MAD policies in contrast with DF and standard RL policies.}
    \label{tab:comparison}
    \begin{tabular}{@{}l>{\arraybackslash}m{4.1cm}>{\arraybackslash}m{4.1cm}>{\arraybackslash}m{3.9cm}@{}}
        \toprule
        & \centering \textbf{MAD policies} & \centering \textbf{DF policies} & \qquad ~\textbf{standard RL policies}\\ 
        \midrule
        Guarantees for pre-stabilized systems & \cmark~all and only stable closed-loops & \cmark~all and only stable closed-loops& \xmark~ conservative assumptions\\
        Stability using model-free RL & \cmark~if unknown system is stable  & \xmark~cannot implement & \xmark~no a-priori guarantees \\
        Explicit neural state-feedback & \cmark~in the direction term & \xmark~implicitly through disturbances& \cmark~neural state-feedback\\
        Generalization capabilities & \cmark~enhanced by direction term & \xmark~low (no explicit state-feedback) & \cmark~highest\\
        \bottomrule
    \end{tabular}
    \vspace{-0.2cm}
\end{table*}
\section{Numerical examples}
\label{se:Numerical}
In this section, we discuss implementations of MAD policies and how they can be trained using off-the-shelf RL algorithms.  We consider the $\mathtt{corridor}$ benchmark example from \cite{furierineuralSLS} and train MAD policies through both model-based and model-free DDPG \cite{lillicrap2015continuous}; see  Appendix~\ref{app:mountains} for the definition of the system dynamics and the task loss function. We also train 1) DF policies $\mathbf{u}=\mathbfcal{M}(\mathbf{w})$ from \cite{furieri2024learningtoboost} and 2) standard memoryless RL policies $u_t = \mathtt{NN}(x_t)$ to assess the expressivity and generalization capabilities of MAD policies.\footnote{Our PyTorch implementation is available at: \url{https://github.com/DecodEPFL/mad-rl-policy}.} 

\subsection{Implementation of MAD policies}
Recall the expression \eqref{eq:MAD_policy} of MAD policies. To parametrize the terms $\mathbfcal{M}(\widehat{\mathbf{w}})$ and $\bm{a}(x_0)$ that must be in $\mathcal{L}_p$, we employ linear recurrent units (LRUs) due to their simple structure and rich expressivity~\cite{orvieto2023resurrecting}. To parametrize the direction term $\mathbf{D}(\mathbf{x})$, we employ a multi-layer perceptron  and wrap a $\operatorname{tanh}$ around to guarantee $|D_t(x_{t:0})|<1$. In summary, we parametrize policies of the form
\begin{equation}
\label{eq:MAD_LRU}
    u_t = |\mathtt{LRU}_t(\textcolor{blue}{\theta_1})(\widehat{w}_{t:0})+\textcolor{black}{\mathtt{LRU}_t(\textcolor{blue}{\theta_2})(}x_0\textcolor{black}{)}|\tanh(\textcolor{black}{\mathtt{NN}(}x_t,\textcolor{blue}{\psi}\textcolor{black}{)})\,,
\end{equation}
where the $\mathtt{LRU}_t(\textcolor{blue}{\theta_i})(v_{t:0})$  terms for $i\in \{1,2\}$ are generated by the parametrized system with internal state $\xi_t \in \mathbb{C}^{n_\xi}$:
\begin{align}
\xi_{t+1} &= \textcolor{blue}{\Lambda_i}\,\xi_t + \textcolor{black}{\Gamma}(\textcolor{blue}{\Lambda_i})\textcolor{blue}{\,B_i}\,v_t, \label{LRU_1}\\
\mathtt{LRU}_t(\textcolor{blue}{\theta_i})(v_{t:0})&=\mathtt{NN}(\operatorname{\mathbb{R}e} (\textcolor{blue}{C_i}\,\xi_t ) + \textcolor{blue}{D_i}\,v_t, \textcolor{blue}{\phi_i}) + \textcolor{blue}{F_i}v_t \nonumber\,,
\end{align}
where $\mathbb{R}e$ denotes the real part operator. The input $v_t$ satisfies $v_t = \widehat{w}_t$ for the ``M'' term, and $v_t = x_0$ if $t= 0$ and $v_t = 0$ otherwise for the ``A'' term. In \eqref{LRU_1}, stability is enforced by designing $\textcolor{blue}{\Lambda_i}$ as a diagonal matrix with entries $\lambda_i$ such that $|\lambda_i|<1$, and $\Gamma(\textcolor{blue}{\Lambda_i})$ is a diagonal normalization term.

\subsection{Comparison with DF and standard RL policies}
We benchmark the performance and generalization of MAD policies implemented as per \eqref{eq:MAD_LRU} against DF and standard RL policies. Both MAD and MA policies have around 1900 trainable parameters. The standard RL policy is an MLP with approximately 450 parameters, which appeared to be sufficient to achieve optimized performance. All policies are trained through DDPG to reach a goal position while passing through a narrow corridor and avoiding collisions. With reference to Figure~\ref{fig:state_trajectories}, all initial conditions during training are sampled from the same neighborhood; the orange agent aims to swap from right to left, and vice-versa for the blue agent, forming a crossing ``\hspace{0.04cm}$\mathcal{X}$-shaped'' path. We validate trained policies sampling initial conditions  from the same neighborhoods around those seen in training. To assess generalization capabilities, we interchange the distributions of the initial conditions of the two agents, thus aiming for a non-crossing ``\hspace{0.04cm}\reflectbox{$\mathcal{C}$}\;\hspace{-0.04cm}$\mathcal{C}$-shaped'' path. Figure~\ref{fig:state_trajectories} shows the closed-loop trajectories corresponding to trained MAD policies in the model-based scenario. The left panel displays results on the validation dataset, whereas the right panel shows the successful execution of the task even when the initial conditions of the two agents are swapped. 

In the model-based scenario we train 1) a general MAD policy, and 2) a ``MA'' policy that selects the direction term $D_t(\cdot)$ as in \eqref{eq:d_choice} with no explicit state-feedback, corresponding to a DF policy~\cite{furieri2024learningtoboost}.
In the model-free setting, dynamics are unknown and the disturbance $\widehat{w}_t$ cannot be reconstructed, making DF policies inapplicable. We therefore train an ``AD'' policy of the form $u_t=|\mathtt{LRU}_t(\textcolor{blue}{\theta_2})(x_0)|\operatorname{tanh}(\mathtt{NN}(x_t, {\color{blue} \psi}))$, and compare it with a standard neural policy $u_t = \mathtt{NN}(x_t)$.

\begin{figure}[t]
    \centering
    \includegraphics[width=\columnwidth]{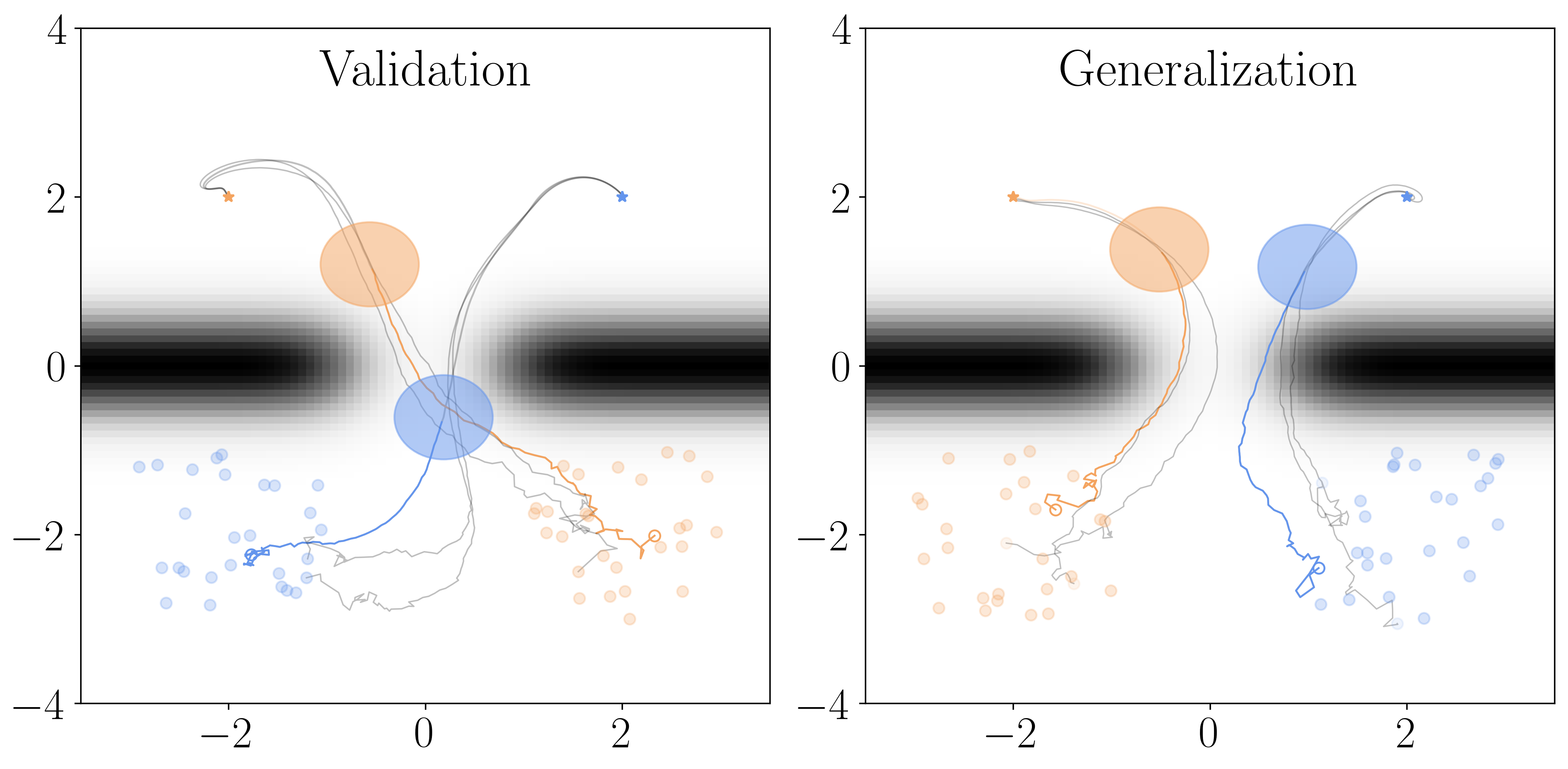}
    \caption{Closed-loop trajectories after training with MAD policies.
    Initial conditions are marked with $\circ$.  The colored balls (and their radii) represent the agents (and their size for collision avoidance). Black objects represent the obstacles.}
    \label{fig:state_trajectories}
\end{figure}

\begin{figure*}[h!]
    \centering
    \subfloat[Validation. \label{fig:validation}]{\includegraphics[height = 4.5cm]{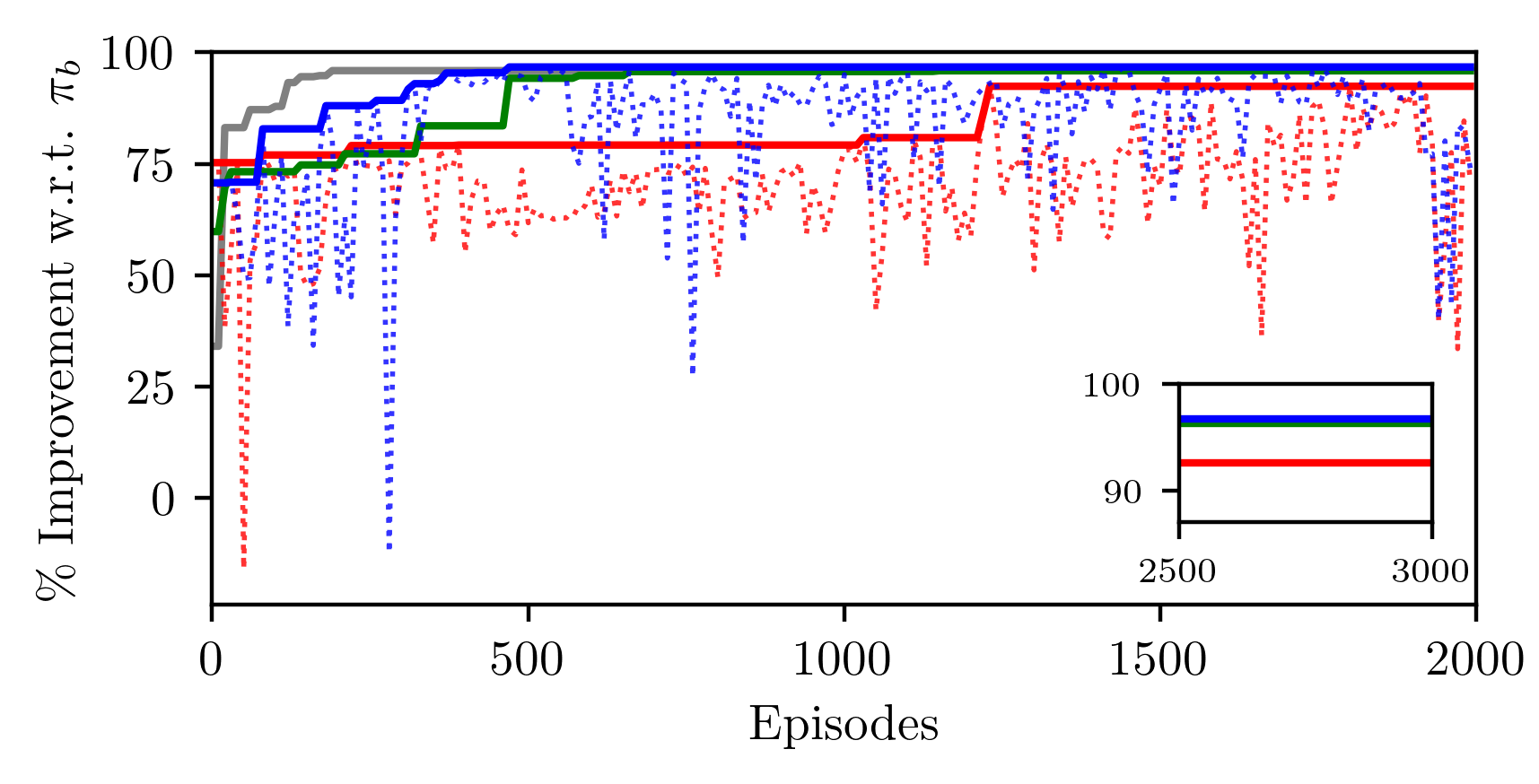}}
    \hfill
    \subfloat[Generalization. \label{fig:generalization}]{\includegraphics[height = 4.5cm]{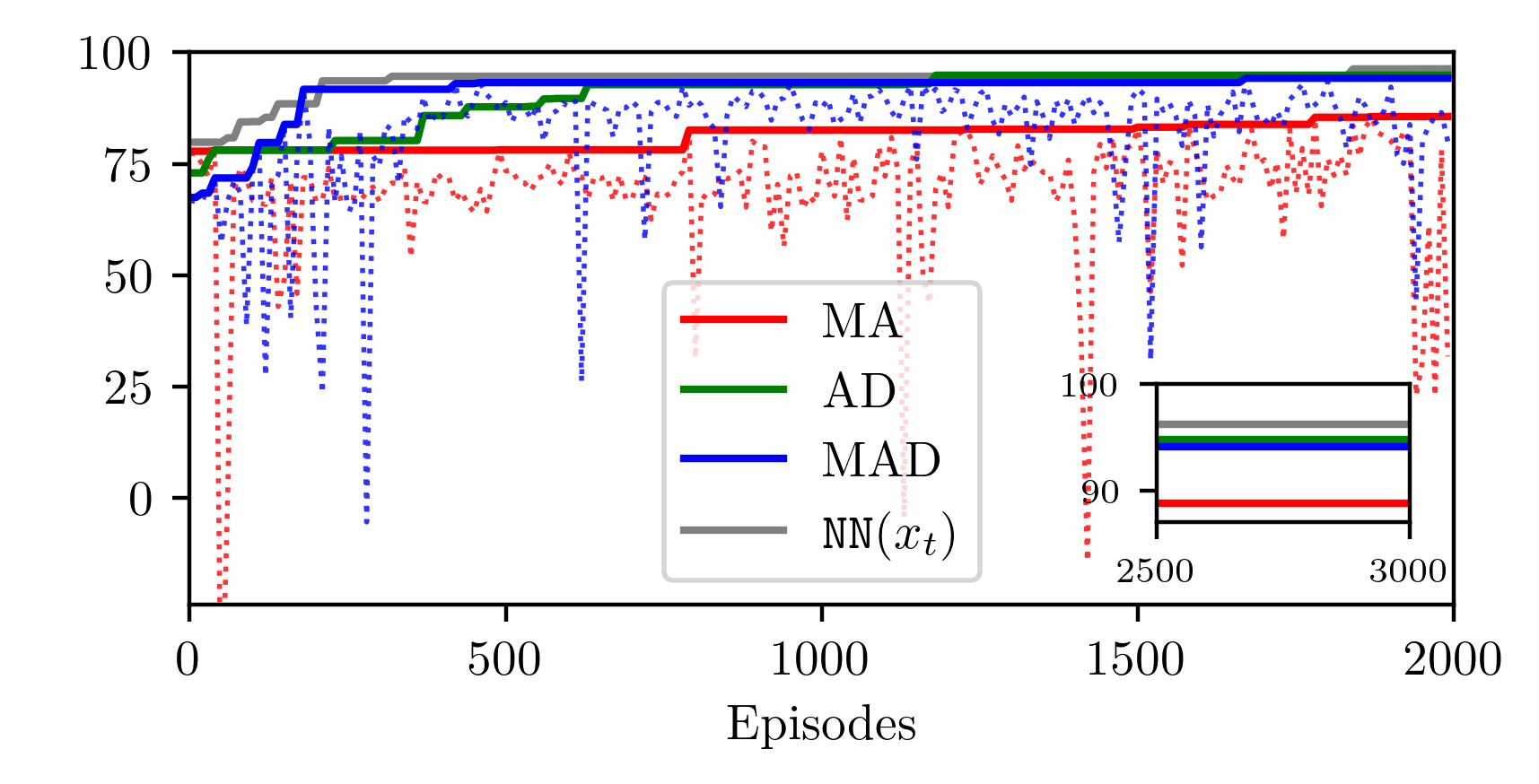}}
  \caption{Percentage improvement in control performance over the pre-stabilizing controller $\pi_b$ defined in \eqref{eq:pre_controller}. Dotted lines represent the performance in each episode, while solid lines indicate the best-so-far performance for each policy class. To reduce visual clutter, episodic performance (dotted lines) is shown only for the MA and MAD policies. 
  The inset plot displays the long-term best-so-far performance between episodes 2500 and 3000.}
\label{fig:training_loss_comparison}
\end{figure*}
In Figure~\ref{fig:training_loss_comparison}, we contrast the performance of the considered policies on the validation (Figure~\ref{fig:validation}) and generalization (Figure~\ref{fig:generalization}) tasks as the training episodes increase. We first note that MAD and AD policies appear to significantly improve the performance on the generalization task as compared to DF policies. In accordance with MLP policies, such an increase in generalization capabilities can be attributed to the inclusion of a neural state-feedback through the direction term. Second, embedding information on the system dynamics in the MAD policy enables faster training with respect to AD, as expected. Crucially, this example shows that enforcing stability constraints via MAD policies does not significantly limit expressivity, as both MAD and AD policies closely match the validation and generalization performance of a standard MLP policy.%

\section{Conclusion}
This paper introduces a novel class of policies tailored to learning stable closed-loop behaviors through reinforcement. The proposed approach leverages a polar decomposition of the control input into magnitude and direction components, extending the expressivity of DF policies without compromising their inherent stability guarantees. 
We showed that adding an explicit neural state-feedback term achieves generalization capabilities that match those of standard RL policies, which, however, may not comply with key stability requirements. Future work includes testing MAD in more complex environments and adapting it to partially observed systems with stronger stability requirements.


\bibliographystyle{IEEEtran}
\bibliography{biblio}   
\newpage

\appendix
\subsection{Proof of Theorem~\ref{th:extended-param}}
\label{app:proof_theorem}
    We first show that $\mathcal{B}_{\mathcal{U}}(\mathbf{w}) \subseteq  \mathcal{B}^{\mathtt{MAD}}_{\mathcal{U}}(\mathbf{w})$. Take any $\mathbf{u} \in \mathcal{B}_{\mathcal{U}}(\mathbf{w})$. It holds that $\mathbf{u} = \mathbfcal{M}(\mathbf{w})$ for some $\mathbfcal{M} \in \mathcal{U}$. By selecting 
    \begin{equation}
    \label{eq:d_choice}
    D_t(x_{t:0}) = \begin{cases}\frac{\mathcal{M}_t(w_{t:0})}{|\mathcal{M}_t(w_{t:0})|}& \text{if $\mathcal{M}_t(\cdot) \neq 0$}\,,\\ 0&\text{otherwise}\,,\end{cases}
    \end{equation}
 where we recursively define $w_t= x_t-f(x_{t-1},u_{t-1})$, and $a_t(x_0) = 0$,  we obtain that $u_t$ can be equivalently written as per \eqref{eq:MAD_policy}, see also \cite{martin2024learning}. Therefore, $\mathbf{u}\in \mathcal{B}^{\mathtt{MAD}}_{\mathcal{U}}(\mathbf{w})$. To show that $\mathcal{B}^{\mathtt{MAD}}_{\mathcal{U}}(\mathbf{w}) \in \ell_p$ when $\mathcal{U} \in \mathcal{L}_p$ and $\mathbf{w} \in \ell_p$, consider the signal $|\mathbfcal{M}(\mathbf{w})+\bm{a}(x_0)|$, where $|\cdot|$ is applied to each time component individually. Since $\mathbfcal{M}(\mathbf{w}) \in \ell_p^m$ and $\bm{a}(x_0)\in \ell_p^m$, their sum also lies in $\ell_p^m$. Further, $|\mathbfcal{M}(\mathbf{w})+\bm{a}(x_0)|\in \ell_p^1$ because any norm on a one-dimensional vector space is equivalent (up to scaling) to the absolute value norm. Since $|u_t|\leq |\mathcal{M}_t(w_{t:0})+a_t(x_0)|$ due to $|D_t(x_{t:0})|\leq 1$, we conclude that $||\mathbf{u}||_{p}\leq ||~|\mathbfcal{M}(\mathbf{w})+\bm{a}(x_0)|~||_p$. This implies that $\mathbf{u} \in \ell_p^m$.

Last, given any $\mathbfcal{M}\in\mathcal{L}_p$, we can rewrite the control input as 
$u_t = |\mathcal{M}_t(w_{t:0})+a_t(x_0)|\,D_t(x_{t:0})$ 
by selecting $D_t$ as per \eqref{eq:d_choice}
so that $|D_t(x_{t:0})|\le 1$ and $a_t(x_0)=0$. This shows that any policy of the form $\mathbf{u}=\mathbfcal{M}(\mathbf{w})$ with $\mathbfcal{M}\in\mathcal{L}_p$ can be expressed as a MAD policy. Moreover, by Theorem~\ref{thm:neurSLS}, for any stabilizing policy $\mathbf{K}$, there exists $\mathbfcal{M}\in\mathcal{L}_p$ whose response $\mathbf{u}=\mathbfcal{M}(\mathbf{w})$ reproduces the same closed-loop behavior. Hence every stabilizing policy can be viewed as a MAD policy with $\mathbfcal{M}=\bm{\Psi}^u$, where $\bm{\Psi}^u:\mathbf{w}\mapsto\mathbf{u}$ is the map induced by $\mathbf{K}$. Conversely, any MAD policy that uses a magnitude operator $\mathbfcal{M}\in\mathcal{L}_p$ and $\bm{a}:\mathbb{R}^n\mapsto \ell_p$ in $\mathcal{L}_p$  with $|D_t(\cdot)|\leq 1$ ensures $\mathbf{u}\in\ell_p$ with the same reasoning as above, thus preserving closed-loop stability.\looseness-1

\subsection{Proof of Proposition~\ref{th:result_robust}}
\label{app:proof_prop}
The proof adapts arguments from \cite{furieri2024learningtoboost} for DF policies. It was proven in \cite[Theorem~3]{furieri2024learningtoboost} that $\widehat{\mathbf{w}} = \mathbf{\Delta}(\mathbfcal{F}(\mathbf{u},\mathbf{w}),\mathbf{u})+ \mathbf{w}$, which leads to
     \begin{equation}
     \label{eq:inequality}
         \norm{\widehat{\mathbf{w}}}_p\leq\gamma(\mathbf{\Delta})(\gamma(\mathbfcal{F})\norm{\mathbf{w}}_p+\gamma(\mathbfcal{F}) \norm{\mathbf{u}}_p+\norm{\mathbf{u}}_p)+\norm{\mathbf{w}}_p\,.
     \end{equation}

Since it holds that $|u_t|\leq |\mathcal{M}_t(\widehat{w}_{t:0})+a_t(x_0)|$ we deduce that $|u_t|^p\leq |\mathcal{M}_t(\widehat{w}_{t:0})+a_t(x_0)|^p $ for any $p\geq 0$ and therefore $\norm{\mathbf{u}}_p^p\leq \norm{\mathbfcal{M}(\widehat{\mathbf{w}})+\bm{a}(x_0)}_p^p$, which also implies $\norm{\mathbf{u}}_p\leq \norm{\mathbfcal{M}(\widehat{\mathbf{w}})+\bm{a}(x_0)}_p\leq \gamma(\mathbfcal{M})\norm{\widehat{\mathbf{w}}}_p+\norm{\bm{a}(x_0)}_p $. By plugging the latter into \eqref{eq:inequality}, we obtain
\begin{align*}
    &(1-\gamma(\bm{\Delta}) \gamma(\mathbfcal{M})\left(\gamma(\mathbfcal{F})+1\right))\norm{\hatbf{w}}\leq\\
    &\leq (\gamma(\bm{\Delta})\gamma(\mathbfcal{F})+1)\norm{\mathbf{w}}+\gamma(\bm{\Delta})(\gamma(\mathbfcal{F})+1)\norm{\bm{a}(x_0)}\,.
\end{align*}
By requiring \eqref{eq:condition_robustness}, we conclude that $\widehat{\mathbf{w}} \in \ell_p$, and hence $\mathbf{u} \in \ell_p$ and $\mathbf{x} \in \ell_p$ because $\mathbf{u}=\mathbfcal{M}(\widehat{\mathbf{w}})$ and $\mathbf{x} = \mathbfcal{F}(\mathbf{u},\mathbf{w})$.

\subsection{The corridor environment}
\label{app:mountains}
In all the examples, we consider two point-mass vehicles, each with position $p_t^{[i]} \in \mathbb{R}^2$ and velocity $q_t^{[i]} \in \mathbb{R}^2$, for $i=1,2$, subject to nonlinear drag forces (e.g., air or water resistance). The discrete-time model for vehicle $i$ is 
\begin{equation}
\label{eq:mechanical_system}
    \begin{bmatrix}
        p_{t}^{[i]}\\
        q_{t}^{[i]}
    \end{bmatrix}
    = 
    \begin{bmatrix}
        p_{t-1}^{[i]}\\
        q_{t-1}^{[i]}
    \end{bmatrix} 
    + T_s
    \begin{bmatrix}
        q_{t-1}^{[i]}\\
        (m^{[i]})^{-1}\left(-C(q_{t-1}^{[i]}) + F_{t-1}^{[i]}\right)\end{bmatrix}+w_t\,,
\end{equation}
where $m^{[i]}>0$ is the mass, $F_t^{[i]} \in \mathbb{R}^2$ denotes the force control input, $T_s>0$ is the sampling time and $C^{[i]}:\mathbb{R}^2\rightarrow \mathbb{R}^2$ is a \emph{drag function} given by
$C^{[i]}(s) = b_1^{[i]} s - b_2^{[i]} \tanh(s)$, for some $0<b_2^{[i]}<b_1^{[i]}$.
Each vehicle 
must reach a target position $\overline{p}^{[i]}\in \mathbb{R}^2$ with zero velocity in a stable way. This elementary goal can be achieved by using a base proportional controller $\pi_b$ that sets
\begin{equation}
\label{eq:pre_controller}
    F'^{[i]}_t = {K'}^{[i]}(\bar p^{[i]}-p_{t}^{[i]})\,,
\end{equation}
with $K'^{[i]} = \operatorname{diag}(k_1^{[i]},k_2^{[i]})$ and $k_1^{[i]},k_2^{[i]}>0$. The overall dynamics $f(x_{t-1},u_{t-1})$ in \eqref{eq:nonlinsys} is given by \eqref{eq:mechanical_system}-\eqref{eq:pre_controller} with $F^{[i]}_t = F'^{[i]}_t + u_t^{[i]}$, 
where $x_t = (p_t^{[1]},q_t^{[1]},p_t^{[2]},q_t^{[2]})$ and 
$u_t = (u_t^{[1]},u_t^{[2]})$ is a performance-boosting control input to be designed.  
As per~\eqref{eq:nonlinsys}, we consider additive disturbances $w_t$ affecting the system dynamics.
Thanks to the use of the prestabilizing controller \eqref{eq:pre_controller}, one can show that $\mathbfcal{F}(\mathbf{u},\mathbf{w})\in \mathcal{L}_2$. We consider the scenario $\mathtt{corridor}$ of \cite{furierineuralSLS} where
each vehicle must reach the target position in a stable way while avoiding collisions between themselves and with two black obstacles (see Figure~\ref{fig:state_trajectories}). 
Each agent is represented with a circle that indicates its radius for the collision avoidance specifications.
When using the base controller $\pi_b$, 
the vehicles successfully achieve the target, however, they do so with poor performance since collisions are not avoided.
To improve performance, we design the input $\mathbf{u}$ by minimizing an infinite-horizon discounted loss function $L^\infty(\mathbf{x},\mathbf{u}) = \sum_{t=0}^{\infty} \alpha^t l(x_t, u_t)$, where the stage-wise loss function is defined as 
\begin{equation*}
    l(x_t,u_t) = l_{\text{traj}}(x_t, u_t) + l_{\text{ca}}(x_t) + l_{\text{obs}}(x_t).
\end{equation*}
The first component of the stage-wise loss, $l_{\text{traj}}(x,u) =\begin{bmatrix}(x-\bar{x})^\top \ u^\top\end{bmatrix} S \begin{bmatrix}(x-\bar{x})^\top \ u^\top\end{bmatrix}^\top$ with $S \succeq 0$, penalizes deviations from target states and excessive control efforts. The second component, $l_{\text{ca}}(x)=S_{ca}\sum_{i \neq j}\frac{\mathbb{I}(|p_t^{[i]} - p_t^{[j]}|^2 < d_{\text{min}}^2)}{|p_t^{[i]} - p_t^{[j]}|^2 + \epsilon}$, with $\epsilon >0$ and $S_{ca} > 0$ penalizes potential collisions between vehicles using an indicator function $\mathbb{I}(\cdot)$ that equals 1 when vehicles $i$ and $j$ are closer than a predefined safe distance $d_{\text{min}} \geq 0$ and 0 otherwise. The final component, $l_{\text{obs}}(x_t)=S_{obs}\sum_{i}\exp\left(-\frac{1}{2}(p_t^{[i]} - \mu_{\text{obs}})^\top \Sigma_{\text{obs}}^{-1}(p_t^{[i]} - \mu_{\text{obs}})\right)$, with $S_{obs}>0$ penalizes proximity to obstacles modeled by Gaussian distributions with mean $\mu_{\text{obs}}$ and covariance $\Sigma_{\text{obs}}$.

\end{document}